\newcommand{\link}{{\tt link}}
\newcommand{\dm}{\mbox{{\tt delete}-{\tt min}}}
\newcommand{\dc}{\mbox{{\tt decrease}-{\tt key}}}
\newcommand{\ins}{\mbox{{\tt insert}}}
\newcommand{\meld}{\mbox{{\tt meld}}}
\title{Pairing heaps: the forward variant}
\author{Dani Dorfman}{Blavatnik School of Computer Science, Tel Aviv University, Israel}{dannatand@mail.tau.ac.il}{}{}
\author{Haim Kaplan}{Blavatnik School of Computer Science, Tel Aviv University, Israel}{haimk@post.tau.ac.il}{}{Research supported by The Israeli Centers of Research Excellence (I-CORE) program (Center No.\ 4/11), Israel Science Foundation grant no.\ 1841-14.}
\author{L\'{a}szl\'{o} Kozma}{Eindhoven University of Technology, The Netherlands}{l.kozma@tue.nl}{}{Work done while at Tel Aviv University. Research supported by The Israeli Centers of Research Excellence (I-CORE) program (Center
No.\ 4/11).}
\author{Uri Zwick}{Blavatnik School of Computer Science, Tel Aviv University, Israel}{zwick@tau.ac.il}{}{Research supported by BSF grant no.\ 2012338 and by
The Israeli Centers of Research Excellence (I-CORE) program (Center
No.\ 4/11).}
\authorrunning{D. Dorfman, H. Kaplan, L. Kozma and U. Zwick}
\subjclass{F.2.2 Nonnumerical Algorithms, E.1 Data
Structures}
\keywords{data structure, priority queue, pairing heap}
\begin{document}

\maketitle

\begin{abstract}
The pairing heap is a classical heap data structure introduced in 1986 by Fredman, Sedgewick, Sleator, and Tarjan. It is remarkable both for its simplicity and for its excellent performance in practice. The ``magic'' of pairing heaps lies in the restructuring that happens after the deletion of the smallest item. The resulting collection of trees is consolidated in two rounds: a left-to-right pairing round, followed by a right-to-left accumulation round. Fredman et al.\ showed, via an elegant correspondence to splay trees, that in a pairing heap of size $n$ all heap operations take $O(\log{n})$ amortized time. They also proposed an arguably more natural variant, where both pairing and accumulation are performed in a combined left-to-right round (called the forward variant of pairing heaps). The analogy to splaying breaks down in this case, and the analysis of the forward variant was left open. 

In this paper we show that inserting an item and deleting the minimum in a forward-variant pairing heap both take amortized time $O(\log{n} \cdot 4^{\sqrt{\log n}})$. This is the first improvement over the $O(\sqrt{n})$ bound showed by Fredman et al.\ three decades ago. Our analysis relies on a new potential function that tracks parent-child rank-differences in the heap.

 \end{abstract}

\section{Introduction}

A heap is an abstract data structure that stores a set of keys, supporting the usual operations of creating an empty heap, inserting a key, finding and deleting the smallest key, decreasing a key, and merging (``melding'') two heaps. Heaps are ubiquitous in computer science, used, for instance, in Dijkstra's shortest path algorithm and in the Jarn\'{i}k-Prim minimum spanning tree algorithm (see e.g.,~\cite{Cormen}). Heaps are typically implemented as trees (binary or multiary), where each node stores a key, and no key may be smaller than its parent-key.

The pairing heap, introduced by Fredman, Sedgewick, Sleator, and Tarjan~\cite{pairing}, is a popular heap implementation that supports all operations in $O(\log{n})$ amortized time. It is intended to be a simpler, \emph{self-adjusting} alternative to the Fibonacci heap~\cite{fibonacci}, and has been observed to have excellent performance in practice~\cite{Stasko,Moret,Katriel,Tarjan14}. Despite significant research, the exact complexity of pairing heap operations is still not fully understood~\cite{pairing,FredmanLB,IaconoUB,Pettie,Elmasry1,IaconoOzkan,Iacono16}.

\subparagraph*{Self-adjusting data structures.} Many of the fundamental heap- and tree- data structures in the literature are designed with a certain ``good structure'' in mind, which guarantees their efficiency. Binary search trees, for instance, need to be (more or less) balanced in order to support operations in $O(\log{n})$ time. In the case of heaps, a reasonable goal is to keep node-degrees low, as it is costly to delete a node with many children. Maintaining a tree structure in such a favorable state at all times requires considerable book-keeping. Indeed, data structures of this flavor (AVL-trees~\cite{avl}, red-black trees~\cite{Bayer1972, Guibas78}, Binomial heaps~\cite{Vuillemin78}, Fibonacci heaps~\cite{fibonacci}, etc.) store structural parameters in the nodes of the tree, and enforce strict rules on how the tree may evolve.

By contrast, pairing heaps and other \emph{self-adjusting} data structures store no additional information besides the keys and the pointers that represent the tree itself. Instead, they perform local re-adjustments after each operation, seemingly ignoring global structure. Due to their simplicity and low memory footprint, self-adjusting data structures are appealing in practice. Moreover, self-adjustment allows data structures to \emph{adapt} to various usage patterns. This has led, in the case of search trees, to a rich theory of instance-specific bounds (see e.g.,~\cite{ST85, in_pursuit}). 
The price to pay for the simplicity and power of self-adjusting data structures is the complexity of their analysis. With no rigid structure, self-adjusting data structures are typically analysed via potential functions that measure, informally speaking, how far the data structure is from an ideal state.

The standard analysis of pairing heaps~\cite{pairing} borrows the potential function developed for splay trees by Sleator and Tarjan~\cite{ST85}. This technique is not directly applicable to other variants of pairing heaps. There is a large design space of self-adjusting heaps similar to pairing heaps, yet, we currently lack the tools to analyse their behavior (apart from isolated cases). We find it a worthy task to develop tools to remedy this situation.

\subparagraph{Description of pairing heaps.}
A pairing heap is built as a single tree with nodes of arbitrary degree. Each node is identified with a key, with the usual min-heap condition: every (non-root) key is larger than its parent key. 

The heap operations are implemented using the unit-cost \emph{linking} primitive: given two nodes $x$ and $y$ (with their respective subtrees), $\link(x,y)$ compares $x$ and $y$ and lets the greater of the two become the leftmost child of the smaller.

The \dm~operation works as follows. 
We first delete (and return) the root $r$ of the heap (i.e., the minimum), whose children are 
$x_1, \dots, x_k$. 
We then perform a left-to-right pairing round, calling $\link(x_{2i-1}, x_{2i})$ for all $i=1, \dots, \lfloor k/2 \rfloor$. The resulting roots are denoted $y_1, \dots, y_{\lceil k/2 \rceil}$. (Observe that if $k$ is odd, the last root is not linked.) Finally, we perform a right-to-left accumulate round, calling $\link(p_{i},y_{i-1})$ for all $i=\lceil k/2 \rceil, \dots, 2$, where $p_i$ is the minimum among $y_i,\dots,y_{\lceil k/2 \rceil}$. We illustrate this process in Figure~\ref{fig1}.

The other operations are straightforward. In \dc, we cut out the node whose key is decreased (together with its subtree) and link it with the root. In \ins, we link the new node with the root. In \meld, we link the two roots.

Fredman et al.~\cite{pairing} showed that the amortized time of all operations is $O(\log{n})$. They also conjectured that, similarly to Fibonacci heaps, the amortized time of \dc~is in fact constant. This conjecture was disproved by Fredman~\cite{FredmanLB}, who showed that in certain sequences, \dc~requires $\Omega(\log\log{n})$ time. In fact, the result of Fredman holds for a more general family of heap algorithms. Iacono and \"{O}zkan~\cite{IaconoOzkan, IaconoOzkan2} gave a similar lower bound for a different generalization\footnote{Both results are rather subtle, for instance, it is not clear whether \dc~remains costly in an implementation where we cut an affected node only if its decreased key is smaller than that of its parent.} of pairing heaps. Assuming $O(\log{n})$ time for \dc, Iacono~\cite{IaconoUB} showed that \ins~takes constant amortized time. Pettie~\cite{Pettie} proved that both \dc~and \ins~take $O(4^{\sqrt{\log\log{n}}})$ time.  Improving these trade-offs remains a challenging open problem.

\subparagraph*{Pairing heap variants.}%Variations on a theme.}

We refer to the pairing heap data structure described above as the \emph{standard} pairing heap. Fredman et al.~\cite{pairing} also proposed a number of variants that differ in the way the heap is consolidated during a \dm~operation.

We describe first the \emph{forward variant}, which is the main subject of this paper (in the original pairing heap paper this is called the \emph{front-to-back variant}). The pairing round in the forward variant is identical to the standard variant, resulting in roots $y_1, \dots, y_t$, where $t=\lceil k/2 \rceil$, and $k$ is the number of children of the deleted root. In the forward variant, however, we perform the second round also from left to right: for all $i=1, \dots, t-1$, we call $\link(p_{i},y_{i+1})$, where $p_i$ is the minimum among $y_1,\dots,y_i$. Occasionally we refer to $p_i$ as the current \emph{left-to-right minimum}. See Figure~\ref{fig1} for an illustration. What may seem like a minor change causes the data structure to behave differently. Currently, the forward variant appears to be much more difficult to analyse than the standard variant.

The implementation of the forward variant is arguably simpler. The two passes can be performed together in a single pass, using only the standard \emph{leftmost child} and \emph{right sibling} pointers. In fact, it is possible to implement the two rounds in one pass also in the standard variant. To achieve this, we need to perform both the pairing and accumulation rounds \emph{from right to left}. As shown by Fredman et al.~\cite{pairing}, this can be done with a slightly more complicated link structure, and the original analysis goes through essentially unchanged. Regardless of the low-level details, it remains the case that of the two most natural pairing heap variants one is significantly better understood than the other. 

Yet another variant described by Fredman et al.~\cite{FredmanLB} is the \emph{multipass} pairing heap. Here, instead of an accumulation round, repeated pairing rounds are executed, until a single root remains. For multipass pairing heaps, the bound of $O\bigl(\log{n} \cdot \log\log{n} / \log\log\log{n}\bigr)$ was shown~\cite{pairing} on the cost of a \dm.\footnote{A recently claimed improvement by Pettie~\cite{pettie_slides} would reduce this to an almost, but not quite, logarithmic bound. The result of Pettie has, in some sense, inspired our results.}

Fredman et al.~\cite{pairing} also describe what we could call the \emph{arbitrary pairing and linking} variant. Here, before performing the initial pairing round, the roots may be arbitrarily reordered. After the pairing round, arbitrary pairs of roots (not necessarily neighbors) are linked, until there is a single root left. For this general case (that subsumes all previous variants) Fredman et al.~\cite{pairing} show a tight\footnote{It is not clear how efficient this strategy is if we only allow \link~operations between neighboring siblings.} $\Theta(\sqrt{n})$ amortized bound for \dm. Even in the special case of the forward variant, the $O(\sqrt{n})$ upper bound has never been improved. 

Our main result is the following.

\begin{restatable}[]{theorem}{mainclaim}\label{thm2}
In the forward variant of pairing heaps, the amortized costs of \dm~and \ins~are $O(\log{n} \cdot 4^{\sqrt{\log{n}}})$, where $n$ is the number of items in the heap when the operation is performed.
\end{restatable}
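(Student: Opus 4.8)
The plan is to design a potential function $\Phi$ that is a sum of local terms, one per parent--child edge, each term depending only on the \emph{rank-difference} across that edge, and then to prove that \ins~and \dm~both have the stated amortized cost with respect to $\Phi$; since $\Phi\ge 0$ and $\Phi=0$ on the empty heap, the worst-case bound over any sequence follows. For a node $v$ let $s(v)$ be its subtree size and $r(v)=\lfloor\log_2 s(v)\rfloor$ its rank, and set
\[
\Phi \;=\; \sum_{v}\ \sum_{c\ \text{child of}\ v}\ \psi\bigl(r(v)-r(c)\bigr),
\]
with $\psi\colon\mathbb{Z}_{\ge 0}\to\mathbb{R}_{\ge 0}$ non-decreasing and concave, $\psi(0)=0$, and $\psi(\delta)=\Omega(1)$ for $\delta\ge 1$. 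Two elementary facts are used throughout: ranks never decrease toward the root and are at most $\log_2 n$; and for every node $v$, $\sum_{c\ \text{child of}\ v}2^{-(r(v)-r(c))}<2$ (hence also at most $2^{\delta+1}$ children have rank-difference at most $\delta$). The real work is in the choice of $\psi$. I would take $\psi$ to grow across $\Theta(\sqrt{\log n})$ geometric scales covering $[0,\log_2 n]$, subject to the \emph{budget condition} $2^{\delta}\bigl(\psi(\delta)-\psi(\delta-1)\bigr)=O(\log n\cdot 4^{\sqrt{\log n}})$ for all $\delta$ --- which already forces $\psi(\log_2 n)=O(\log n\cdot 4^{\sqrt{\log n}})$ --- while being large enough for the \dm~analysis below; reconciling these two demands is where the $4^{\sqrt{\log n}}$ appears.

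Bounding \ins~(and, identically, \meld) is routine once $\psi$ is fixed. Attaching the new rank-$0$ node as leftmost child of the root $\rho$ adds one edge of weight $\psi(r(\rho))\le\psi(\log_2 n)=O(\log n\cdot 4^{\sqrt{\log n}})$, and it may raise $r(\rho)$ by one; the bump increases the weight of each old child $c$ by $a_{\delta_c}$, where $a_\delta:=\psi(\delta)-\psi(\delta-1)$ and $\delta_c=r(\rho)-r(c)$. By the budget condition $a_{\delta_c}=O(\log n\cdot 4^{\sqrt{\log n}})\cdot 2^{-\delta_c}$, so summing over children and using $\sum_c 2^{-\delta_c}<2$ bounds the total increase by $O(\log n\cdot 4^{\sqrt{\log n}})$. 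Since \ins~costs $O(1)$ work, its amortized cost is $O(\log n\cdot 4^{\sqrt{\log n}})$.

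For \dm, let $r$ be the deleted root with children $x_1,\dots,x_k$, so the actual cost is $\Theta(k)$. Removing $r$ releases $\phi_r=\sum_i\psi\bigl(r(r)-r(x_i)\bigr)$ from $\Phi$, and since at most one child of $r$ has rank $\ge r(r)$, at least $k-1$ of these terms are $\Omega(1)$; thus $\phi_r=\Omega(k)$, which already pays for the actual cost. It remains to bound the increase of $\Phi$ caused by the combined pairing-and-forward-accumulation round by $\phi_r-\Theta(k)+O(\log n\cdot 4^{\sqrt{\log n}})$. The round creates $\approx k$ new edges and, more dangerously, lets the left-to-right minima that survive the accumulation (and ultimately the new root) absorb large subtrees, inflating the rank-differences of all of their pre-existing children. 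The new pairing edges are harmless --- the loser of a pairing link becomes a child of a node whose subtree is no larger than $s(r)$, so the new edge's weight is at most the weight that loser contributed to $\phi_r$ --- but the accumulation edges and the rank-inflation are not.

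I expect essentially all of the difficulty to be here: charging the potential created by rank-inflation (and by the accumulation edges) against $\phi_r$ and against the edge-weights destroyed inside the affected subtrees. The approach I would take is a scale-by-scale induction: partition $[0,\log_2 n]$ into $\Theta(\sqrt{\log n})$ bands and prove, by downward induction on bands, that the potential increase attributable to rank-inflation within ranks $\ge\rho$ during one \dm~is at most twice the same quantity for ranks $\ge\rho+\Theta(\sqrt{\log n})$, plus a base term that telescopes to $O(\log n)$ over all bands. The factor-$2$ loss per band seems intrinsic: a single surviving node can drag its whole child set up through a band, and this can only be paid for against potential released higher up, at constant-factor overhead --- which is exactly what forces $\psi$ to span a factor of $4^{\Theta(\sqrt{\log n})}$ and, on unrolling the recursion, yields the $4^{\sqrt{\log n}}$ factor (with the leading $\log n$ absorbing the per-band base terms). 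Producing a single explicit $\psi$ that meets the budget condition \emph{and} makes every step of this accounting go through is, I anticipate, the main obstacle.
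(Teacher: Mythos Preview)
Your plan diverges from the paper at the very first step, and the divergence is fatal rather than merely stylistic. You take $r(v)=\lfloor\log_2 s(v)\rfloor$ with $s(v)$ the subtree size; the paper's rank is the \emph{key-rank}, i.e., the number of items in the heap with smaller key. This is not a cosmetic choice. With key-ranks, a \link\ between siblings $x<y$ changes only $rd(y)$, and it changes it \emph{downward}: $rd(y)$ drops from $r(y)-r(\text{parent})$ to $r(y)-r(x)$. No other rank-difference in the entire heap moves. In particular there is no ``rank-inflation'' at all --- the phenomenon you identify as the main obstacle simply does not exist in the paper's potential, and the whole scale-by-scale induction you propose to tame it is unnecessary.

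With your subtree-size ranks the obstacle is real and, as far as I can see, your plan does not overcome it. Consider a heap whose root has two children $v$ and $x$, where $v$ has $m$ leaf children (so $s(v)=m{+}1$, $r(v)\approx\log_2 m$) and $x$ carries the rest of the heap. In the \dm, $k=2$, so the actual cost and the released potential $\phi_r$ are $O(1)$. If $v$ wins the single pairing link, its subtree jumps to size $n{-}1$ and its rank to $\approx\log_2 n$; every one of $v$'s $m$ leaf children has its rank-difference pushed from $\approx\log_2 m$ to $\approx\log_2 n$, so $\Phi$ increases by $m\bigl(\psi(\log_2 n)-\psi(\log_2 m)\bigr)$. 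Nothing in the current operation can pay for this: there are only $O(1)$ edges touched otherwise. Taking, say, $m=\sqrt n$, the amortized cost of this single \dm\ is $\Omega\bigl(\sqrt n\cdot(\psi(\log_2 n)-\psi(\tfrac12\log_2 n))\bigr)$, which stays below $O(\log n\cdot 4^{\sqrt{\log n}})$ only if $\psi$ is essentially flat on $[\tfrac12\log_2 n,\log_2 n]$; repeating the argument with $m=n^\alpha$ for every $\alpha\in(0,1)$ forces $\psi$ to be (asymptotically) constant on all of $[1,\log_2 n]$, at which point it can no longer do the work you need it to do elsewhere. Your ``factor-$2$ per band'' intuition does not address this, because the number of pre-existing children of a running minimum is not controlled by the cost $k$ of the current \dm.

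What the paper actually does: with key-ranks and a piecewise-linear $\phi(x)$ across $t\approx\sqrt{\log n}$ geometric ``categories'' of $rd(x)$, a \link\ between two same-category siblings drops $\Phi$ by at least $1$. Links that are not of this type are handled by a second, combinatorial device: contiguous groups of siblings are placed in \emph{boxes} of size $2^b-1$, with the invariant that a box of size $2^b-1$ contains only nodes of category $<b-1$. A pairing round that sees no same-category link inside a box halves the box (pigeonhole on categories forces this), and a maximal box costs $2^t-1$ links to create; either way one gains $1/t$ in a separate box potential per $O(2^t)$ links. Changes of the left-to-right minimum in the accumulation round (at most $t-1$ of them, since each drops the minimum's category) are the only events that can destroy boxes. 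This is where the $t\cdot 2^t\approx\sqrt{\log n}\cdot 2^{\sqrt{\log n}}$ overhead comes from; the other $2^{\sqrt{\log n}}$ factor is the scale width $q$. None of this machinery is present in your outline, and the key-rank choice that makes it work is precisely what you replaced.
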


The result holds for sequences of \ins~and \dm~operations arbitrarily intermixed, starting from an empty heap. Note that an \ins~operation performs a single \link, its worst-case cost is therefore constant. %However, to bound the amortized cost, we also need to consider the increase in potential that the operation may cause. 

The quantity in the running time is  asymptotically smaller than $n^\varepsilon$, for all $\varepsilon > 0$. We remark that the $4^{\sqrt{\log{n}}}$ term grows much faster than the $\log{n}$ term, as for arbitrary constants $c,d > 1$ it holds that $c^{\sqrt{\log{n}}} = \omega(\log^d{n})$.

Besides the concrete result (heaps with proven logarithmic cost are known, afterall), the contribution of our paper is in the development of a new, fairly general potential function, that may have further applications.

%In our current paper, as a case study, we look at the \emph{forward variant} of pairing heaps, for which Fredman et al.~\cite{pairing} proved the upper bound of $O(\sqrt{n})$ on the amortized time of operations. For \ins~and \dm~we improve this bound to $O(\log{n} \cdot 4^{\sqrt{\log{n}}})$, a quantity 

 % On the other hand, for arbitrary $c > 1$ and $\varepsilon > 0$ we have $c^{\sqrt{\log{n}}} = o(n^{\varepsilon})$.

There has been significant further work in designing heap data structures, with the goal of finding a simpler alternative to Fibonacci heaps, matching, or almost matching their theoretical guarantees (e.g.,~\cite{Stasko, ThinThick, Violation, RankPairing, Hollow, Quake}). These heaps are typically more complicated than pairing heaps. Moreover, they are not self-adjusting, i.e., they store extra information at the nodes, are thus out of the scope of this paper.

\begin{figure}
	\begin{center}
\includegraphics[width=0.99\textwidth]{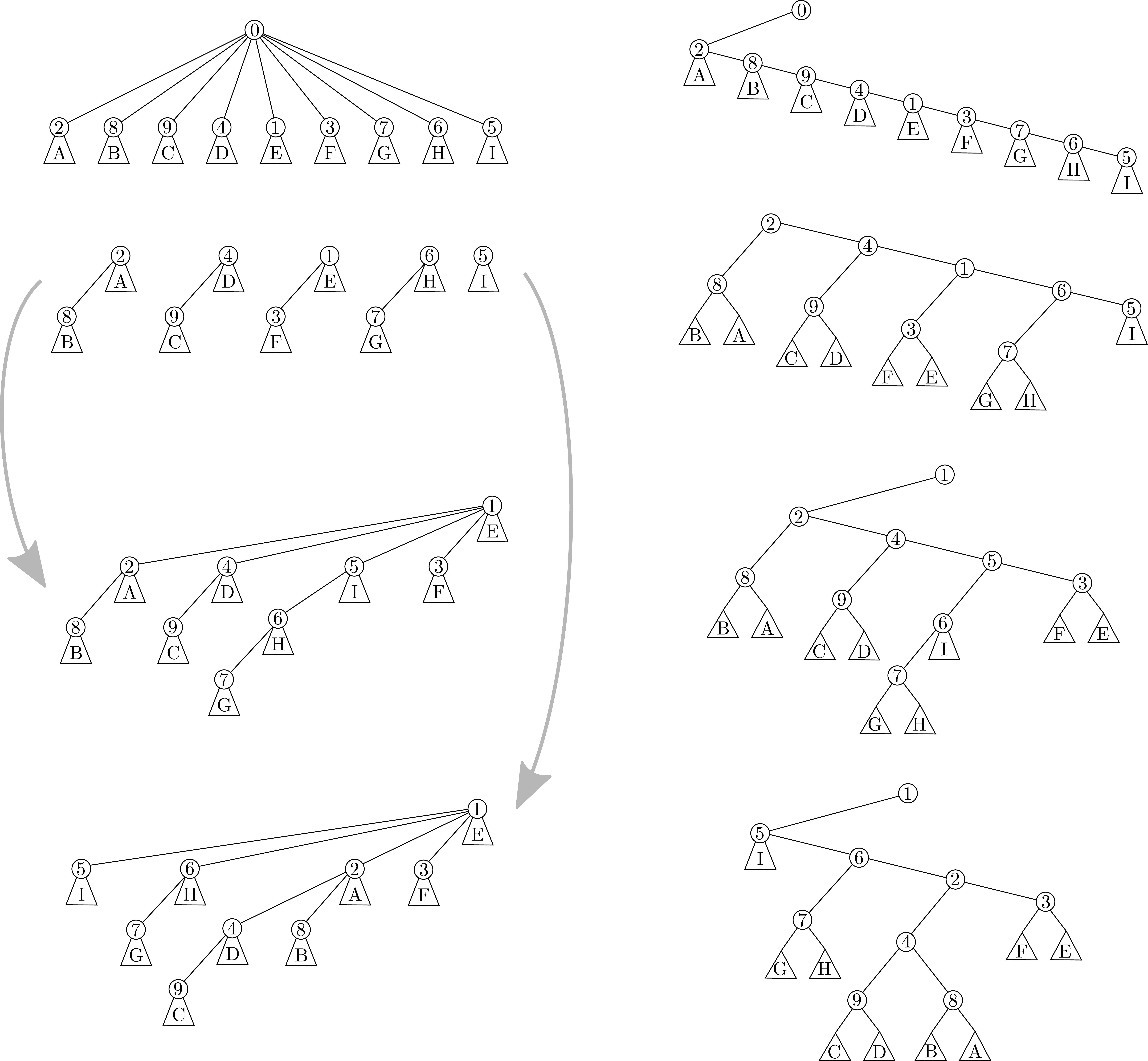}
	\end{center}
\caption{Pairing heap \dm~operation. Left: heap-view, right: binary tree-view (splay-view). From top to bottom: (i) initial heap, (ii) deleting the root and a left-to-right pairing round, (iii) standard variant, resulting heap after right-to-left accumulation round, (iv) forward-variant, resulting heap after left-to-right accumulation round. Circled numbers are keys, letters indicate arbitrary subtrees. Observe that the forward variant (iv) reverses the ordering of the siblings with keys 2,6,5. 
%%\newline It may seem that the accumulation round destroys the splay-like structure obtained in the pairing round. Observe however, that by swapping the left and right children of certain nodes, the attachment-order of the named subtrees to the ``search path'' can be restored. The splay potential is clearly unaffected by these modifications. Such a transformation is possible for the standard variant (iii), but not for the forward variant (iv).
\label{fig1}}
\end{figure}

\vspace{-0.08in}
\subparagraph*{Splaying and sorting.}

There is a standard representation of multi-way rooted trees as binary trees (\cite[\S\,2.3.2]{Knuth}, \cite{pairing}), by renaming the \emph{leftmost child} and \emph{right sibling} pointers as \emph{left child} and \emph{right child}. (See Figure~\ref{fig1}.) In this binary tree representation, the restructuring performed when we delete the minimum from a pairing heap closely resembles the restructuring performed by an access to an item in a splay tree. The correspondence is sufficiently close that the proof of logarithmic access cost in splay trees~\cite{ST85} also goes through for the pairing heap. (We note that this correspondence is far from trivial: one has to relabel some nodes and swap their left and right children in the analysis to make it work; we refer to the original pairing heap paper~\cite{pairing} for details.)

In case of the forward variant, the analogy with splay trees breaks down, with no apparent way to fix it. The reason is that the accumulation round of the forward variant may reverse the ordering of large groups of consecutive neighbors. In splay-view, this has the effect of turning long portions of the search path upside-down, something known to be notoriously hard to analyse, see~\cite{Subramanian96, ESA15}.

There is a close link between heaps and sorting. The ancestor-descendant relationship in a heap captures everything we know about the order of the keys at a certain time. A sequence of \dm\;\!s in a pairing heap (or in any other heap, for that matter), can be seen as transitioning from the current partial order towards the total order, i.e., sorting. (More precisely, we have here a special kind of selection-sort, in which only candidate minima are ever compared.) It would then be natural for the potential function to capture some measure of ``sortedness'' of the heap, for example, the entropy of the partial order or some other quantity related to the number of linear extensions. In general, as observed by Pettie~\cite{Pettie}, this is far from being the case for the splay potential.

The potential function in the classical analysis of splay trees is equal to the sum of the logarithms of subtree-sizes of all nodes in the tree. A subtree in the splay-view corresponds, in the heap-view, to the set of subtrees of a node and all of its right siblings (Figure~\ref{fig1}). It is far from intuitive why such a quantity would be useful in analysing pairing heaps. In particular, this potential function does not distinguish between left and right children in the splay-view, even though, in the heap-view, they play different roles: one is a provably larger key, and the other is (so far) incomparable. It may seem that, implicitly, the splay potential assumes that the left-to-right ordering of siblings (in heap-view) is, to some extent, correlated with the sorted order. We argue that such an assumption is reasonable for the standard variant of pairing heaps, but not for the forward variant (since the accumulation round of the forward variant frequently reverses blocks of consecutive siblings).

As a simple example, consider either the standard or the forward variant, and look at a group of consecutive siblings $x_1, x_2, \dots, x_k, y$ (arbitrarily shuffled), of which $y$ is the smallest. Suppose, for simplicity, that these nodes interact with each other \emph{only in pairing rounds}, i.e., the minimum of the accumulation round always comes from ``outside the group'', and does not change within the group. Assume also that, in each pairing round following a deletion of the parent of the group, $y$ gains one node from the group as its new leftmost child. 

If $x_i$ and $x_j$ both become children of $y$, and $x_i$ is to the left of $x_j$, then $x_i$ became the child of $y$ in a later round than $x_j$, having ``survived'' more rounds (winning the links it took part in) at the same level as $y$. This fact is an indication that $x_i$ may be smaller than $x_j$. When $y$ is eventually deleted, the standard variant will preserve the order of its children, whereas the forward variant will reverse it (assuming again, that there is no minimum-change within the group during the accumulation round).

There are limits to this intuition, and it is easy to construct examples that break it. Nonetheless, this interpretation suggests that we analyse the forward variant by somehow directly using the order-information, instead of trying to infer it from the tree structure. This motivates our potential function in \S\,\ref{sec2}.

The intuition described earlier also hints at why the standard variant of pairing heaps may in fact be faster than the forward variant. If the left-to-right ordering of siblings is indeed the increasing order, then a right-to-left accumulate round is vastly more efficient then a left-to-right round. (The former immediately achieves the sorted order, whereas the latter merely finds the minimum.) On the other hand, in this case, the left-to-right round also reverses the order of siblings, making it optimal for the subsequent round. This intuition is consistent with our experiments that show the forward-variant to be somewhat slower\footnote{Experiments also suggest that multipass is somewhat slower than the standard variant~\cite{Stasko}.} than the standard variant. However, the possibility that the forward-variant also has logarithmic cost has not been ruled out.

It remains an interesting open question to determine the exact complexity of the forward variant, and to characterize the types of instances on which it may outperform the standard variant. We also leave open the question whether \dc~may take $o(\log{n})$ time in this variant, noting that the lower bound of Fredman~\cite{FredmanLB} applies to this case, whereas the upper bound of Pettie~\cite{Pettie} does not.

More importantly, one can hope that new techniques for the analysis of pairing heaps will find their way to the analysis of splay trees and other dynamic search tree algorithms. Splaying and its variants pose some of the most intriguing and central open questions of the field, such as the dynamic optimality conjecture~\cite{ST85, in_pursuit}.

\section{Analysis of the forward variant of pairing heaps\label{sec2}}

Before proving our main result, as a warm-up, we look at the \emph{arbitrary pairing and linking} variant of pairing heaps. First we introduce some terminology.

We define the \emph{rank} of a node $x$ as the number of nodes in the heap with a smaller key,\footnote{Contrary to many other \emph{rank}s in the data structures literature, we use the word in its ``original'' meaning.} e.g., the rank of the root is $0$. (For simplicity, we assume that the keys are unique.) We denote the rank of $x$ by $r(x)$. The \emph{rank-difference} $rd(x)$ of a node $x$ is defined as $rd(x) = r(x) - r(p(x))$, where $p(x)$ is the parent node of $x$. It is clear that $rd(x)$ is positive for all non-root nodes $x$. For the root $r$, we define $rd(r) = 0$. 

\subsection{Arbitrary pairing and linking\label{sec20}}

We show that starting with an arbitrary initial heap of size $n$, the cost of $n$ \dm\;\!s using arbitrary pairing and linking is $O(n\sqrt{n})$. This was already known, as a corollary of Fredman et al.'s result in the original pairing heap paper~\cite{pairing}. 
Our proof is different (and arguably simpler), and is intended to illustrate the use of rank-differences in the analysis. 

Consider a heap of size $n$. Let $\Phi$ denote the sum of rank-differences over all nodes of the heap. Observe that the ranks take all values $0, \dots, n-1$, and the rank-difference of a node is not more than its rank. It follows that $0 \leq \Phi \leq n(n-1)/2$.

For the purpose of the analysis, we slightly change the implementation of \dm. Rather than deleting the root first, we first do the pairing and accumulation rounds on the children of the root, until the root has only one child. Only then, we actually delete the root. This modified algorithm is equivalent to the original.

In this implementation, all \link~operations take place between children of the root. Consider a \link~between nodes $x$ and $y$, whose rank-differences are $a$ and $b$, respectively. Suppose $x < y$, and consequently $a<b$. After the operation, $y$ becomes the leftmost child of $x$. The new rank-difference of $y$ is $b - a$. Observe that the rank-differences of nodes other than $y$ remain unchanged, therefore the potential $\Phi$ decreases by $a$.

\begin{theorem}
The cost of $n$ \dm~operations from an arbitrary initial heap of size $n$, using arbitrary pairing and linking, is $O(n \sqrt{n})$.
\end{theorem}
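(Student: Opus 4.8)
The plan is to bound the total number of \link s performed over all $n$ \dm~operations, using the potential $\Phi$ (the sum of rank-differences over all nodes) together with the two facts already recorded above: $0 \le \Phi \le n(n-1)/2$ at all times, and a single \link~whose winner has rank-difference $a$ decreases $\Phi$ by exactly $a$. The first step is to observe that throughout a sequence consisting only of \dm s (there are no insertions), $\Phi$ is non-increasing: in the modified implementation a \dm~is a sequence of \link s, each of which decreases $\Phi$, followed by the removal of a degree-one root, which also decreases $\Phi$ (by $1$). Hence the total drop of $\Phi$ over the whole sequence is at most its initial value $\Phi_0 \le n(n-1)/2 = O(n^2)$.

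The crux is to show that a single \dm~on a root with $k$ children drops $\Phi$ by $\Omega(k^2)$, even though it performs only $k-1$ \link s. Here I would use that in the modified implementation every \link~is between two children of the current root $r$, and since $r$ is the unique node of rank $0$, the winner of such a \link~is a child of $r$ and its rank-difference equals its rank, a positive integer. Restricting attention to the initial pairing round only: it consists of $\lfloor k/2 \rfloor$ \link s on pairwise disjoint pairs of children of $r$, so the $\lfloor k/2 \rfloor$ winners are distinct children of $r$ and therefore occupy $\lfloor k/2 \rfloor$ \emph{distinct} positive ranks. Consequently the pairing round alone decreases $\Phi$ by at least $1 + 2 + \cdots + \lfloor k/2 \rfloor \ge (k-1)^2/8$. (The subsequent arbitrary-linking round only decreases $\Phi$ further, and is not needed.)

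To finish, write $k_j$ for the number of children of the root during the $j$-th \dm. Combining the two steps yields $\sum_{j}(k_j-1)^2 \le 8\,\Phi_0 = O(n^2)$, where the sum is over the (at most $n$) \dm s with $k_j \ge 1$. Since the cost of the $j$-th \dm~is $k_j - 1$ up to an additive constant, the total cost is $\sum_j (k_j - 1) + O(n)$, and by the Cauchy--Schwarz inequality over at most $n$ terms this is at most $\sqrt{n \cdot \sum_j (k_j-1)^2} + O(n) = \sqrt{n \cdot O(n^2)} + O(n) = O(n\sqrt n)$, as claimed.

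The one place where an idea is needed — and hence the main obstacle — is the second step: extracting a quadratic-in-$k$ potential drop from a \dm~that performs only $k-1$ \link s. This works precisely because the $\lfloor k/2 \rfloor$ pairing-round winners are forced to occupy distinct ranks, whose minimum possible sum is quadratic; this is exactly the structural role of rank-differences that the rest of the paper builds on. The remaining ingredients are routine bookkeeping: monotonicity of $\Phi$ (no insertions, and root-removal is harmless), the a~priori bound $\Phi \le n(n-1)/2$ for an arbitrary heap of size $n$, the handling of trivial cases $k_j \le 1$ (which perform no \link s), and the conversion of the resulting $\ell_2$-type bound on the $k_j$'s into the desired $\ell_1$-type bound.
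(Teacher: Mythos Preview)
Your proof is correct and follows essentially the same approach as the paper: the same potential, the same key observation that the $\lfloor k/2\rfloor$ pairing-round winners have pairwise distinct positive rank-differences, and the same quadratic lower bound $(k-1)^2/8$ on the potential drop per \dm. The only cosmetic difference is the final step: the paper phrases it as ``$\sum_i(k_i-1)^2\le 4n^2$, so $\sum_i k_i$ is maximised when all $k_i=2\sqrt n+1$,'' whereas you invoke Cauchy--Schwarz; these are the same argument.
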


\begin{proof}
Consider a \dm~in which the root has $k$ children. In the pairing round we perform $\lfloor k/2 \rfloor$ \link~operations. By the earlier observation, the potential $\Phi$ goes down by $a_1 + \cdots + a_{\lfloor k/2 \rfloor}$, where $a_i$ is the rank-difference of the ``winner'' (i.e., smaller node) in the $i$-th \link. The values $a_i$ are distinct (since all respective nodes have different ranks, and the same parent), and $a_i > 0$ for all $i$. Thus, the entire pairing round reduces $\Phi$ by at least $1 + 2 + \dots + \lfloor k/2 \rfloor \geq (k^2-1)/8 \geq (k-1)^2 /8$. The remaining operations can only further decrease the potential.

Let $k_i$ be the number of children of the root before the $i$-th \dm~operation. (This is also our estimate for the cost of the operation.) The total reduction in potential due to this operation is at least $\sum_{i = 1}^{n} {{(k_i-1)}^2/8}$. On the other hand, the total reduction in potential over all $n$ operations is not more than $n^2/2$. It follows that $\sum_i{(k_i-1)^2} \leq 4n^2$. Under this condition, setting $k_1 = \cdots = k_n = 2 \sqrt{n} + 1$ maximizes the total real cost $\sum_i{k_i}$. \end{proof}

In its current form, the potential function is not well suited for analysing operations other than \dm; a single \ins, for example, may increase $\Phi$ by a linear term. %It is not hard to fix this, as long as our goal is a modest $O(\sqrt{n})$ amortized time bound for \ins. (We remark that Fredman et al.\ show that operations other than \dm~take constant amortized time~\cite{pairing}.) 
We sketch a new analysis for sequences of \ins~and \dm~operations that foretells the technique used in \S\,\ref{sec21}. 

Define the potential of a node $x$, denoted by $\phi(x)$, to be $rd(x)$, if $rd(x) \leq \sqrt{n}-1$, and $\sqrt{n} - 1 + rd(x) / \sqrt{n}$ otherwise. Call nodes of the first kind \emph{light}, and nodes of the second kind \emph{heavy}. The potential function $\Phi$ is now defined as the sum of $\phi(x)$ over all nodes $x$.
Observe that for a heap of size $n$, the total potential is $\Phi < 2n\sqrt{n}$.

Consider the pairing round in a \dm~operation. Since there are less than $\sqrt{n}$ light nodes among the children of the root, all but $\sqrt{n}$ of the link operations are between heavy nodes. It is easy to verify that a \link~between two heavy nodes reduces the potential by at least $1$. The amortized $O(\sqrt{n})$ cost of \dm~follows.

%An \ins~operation performs a single \link, its real cost is therefore constant. 
Next we bound the amortized cost of an \ins~operation, considering the increase in potential that it may cause. 

First, we have to add the potential of the newly inserted key, which is at most $2\sqrt{n}$. (In case the newly inserted key is the new minimum, it does not contribute to the potential.)

The second, more subtle effect of \ins~is that it may cause a change in the rank-differences of other nodes. A given node $y$ is affected by a newly inserted node $x$ if the rank of $x$ falls between the rank of $y$ and that of its parent $p(y)$. If the affected node $y$ is a light node (before the \ins), then its rank $r(y)$ (before the \ins) can be larger than $r(x)$ by at most $\sqrt{n}-2$ (otherwise, the rank-difference of $y$ would have been too large). There can be at most $\sqrt{n}-1$ such nodes, and the potential of each may go up by $1$. Otherwise, if $y$ is a heavy node, then $\phi(y)$ may go up only by $1/\sqrt{n}$. Overall, the potential $\Phi$ increases by at most $2\sqrt{n}$. The amortized $O(\sqrt{n})$ cost of \ins~follows.

The reader may observe that we ignored the change in potential due to the change in the value of $n$. We can deal with this technicality by keeping $n$ unchanged, as long as it does not get \emph{too far} away from the true number of elements. When that happens, $n$ can be updated by a standard doubling-halving strategy, without affecting the claimed amortized costs. We discuss this issue in more detail in the context of our main result. 

\subsection{The main result\label{sec21}}

%\begin{theorem}[restated] \label{thm2}
\mainclaim*
%In the forward variant of pairing heaps, the amortized costs of \dm~and \ins~are $O(\log{n} \cdot 4^{\sqrt{\log{n}}})$, where $n$ is the number of items in the heap when the operation is performed.
%\end{theorem}

To prove Theorem~\ref{thm2}, we replace the simple potential function used in \S\,\ref{sec20} by one with a more fine-grained scaling. Again, first we present the tools necessary to analyse the heap in a \emph{sorting mode}, i.e., we compute the cost of $n$ \dm~operations on an arbitrary initial heap of size $n$, then we make the necessary changes to analyse 
both \ins~and \dm.

In the following, for the purpose of analysis, we assume that $n$ is an upper bound on the number of nodes in the heap, not greater than four times the true value. (Except for the beginning, when the heap is empty, and we set $n$ to a small constant value, say $n=4$.) 
At the end, we describe how we update $n$, if, after a certain number of operations, the true value reaches $n$, or falls far below $n$.

Let $q=q(n)$ be the \emph{scaling factor} for $n$, an integer that we optimize later; assume for now that $1<q<n$. The \emph{category} of a (non-root) node $x$, denoted $c(x)$ is defined as $c(x) = \lfloor \log_q{rd(x)} \rfloor$. Observe that $c=c(x)$ is the unique integer for which $q^c \leq rd(x) < q^{c+1}$. For all nodes $x$ we have $0 \leq c(x) < t$, where $t = t(n) = \lfloor \log_q{(n-1)} \rfloor + 1$, and $t\geq2$.

For all non-root nodes $x$, we define the \emph{node potential} of $x$ as

$$\phi(x) = \frac{rd(x)-q^c}{q^{c} - q^{c-1}} + c \cdot q, \text{~where~} c=c(x).$$

For the root $r$, we let $\phi(r) = 0$. The total node potential is $\Phi_N = \sum {\phi(x)}$, where the sum ranges over all nodes $x$. Some observations about the node potential are in order.

\begin{lemma}\label{obs}
With the previous definitions, we have:
\begin{enumerate}[(i)]
\item If $rd(y)=rd(x)+1$, and $rd(x) \geq 1$, then $\phi(y)=\phi(x)+1/(q^{c}-q^{c-1})$, where $c=c(x)$.

\item For every node $x$, it holds that $ 0 \leq \phi(x) \leq t \cdot q$, and therefore $\Phi_N \leq n \cdot t \cdot q$.

\item If two nodes of the same category are linked, then $\Phi_N$ decreases by at least 1.
\end{enumerate}
\end{lemma}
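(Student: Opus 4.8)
The plan is to verify all three parts by direct calculation from the closed form of $\phi$, proving~(i) first and then reusing it in~(iii). For~(i): since $rd$ is integer-valued and $q$ is an integer, $rd(x)<q^{c+1}$ forces $rd(x)\le q^{c+1}-1$, hence $rd(y)=rd(x)+1\le q^{c+1}$; combined with $rd(y)>rd(x)$ this leaves only $c(y)\in\{c,c+1\}$, where $c=c(x)$. If $c(y)=c$, the claim is immediate, since the additive term $c\cdot q$ is common to $\phi(x)$ and $\phi(y)$ and the fractional part grows by exactly $1/(q^{c}-q^{c-1})$. In the remaining case $rd(x)=q^{c+1}-1$, $rd(y)=q^{c+1}$, so $c(y)=c+1$ and $\phi(y)=(c+1)q$; here I would use $q^{c+1}-q^{c}=q\,(q^{c}-q^{c-1})$ to rewrite $\phi(x)=\frac{(q^{c+1}-q^{c})-1}{q^{c}-q^{c-1}}+cq=q-\frac{1}{q^{c}-q^{c-1}}+cq$, so that indeed $\phi(y)-\phi(x)=1/(q^{c}-q^{c-1})$.

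For~(ii): non-negativity is clear, as $rd(x)\ge q^{c}$ makes the fractional term $\ge 0$ and $c\ge 0$. For the upper bound, $rd(x)<q^{c+1}$ gives $rd(x)-q^{c}<q^{c+1}-q^{c}=q\,(q^{c}-q^{c-1})$, so the fractional term is $<q$ and $\phi(x)<(c+1)q$; since $rd(x)\le r(x)\le n-1$ we have $c=c(x)\le\lfloor\log_q(n-1)\rfloor=t-1$, whence $\phi(x)<t\cdot q$. Summing over the at most $n$ nodes of the heap gives $\Phi_N\le n\cdot t\cdot q$.

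For~(iii): consider a $\link(x,y)$ with $x<y$ between two nodes of the same category $c$. In the modified implementation both are children of the root, hence share a parent, so with $a=rd(x)$, $b=rd(y)$ we have $q^{c}\le a,b<q^{c+1}$ and $a<b$. The link alters no key and changes the parent of no node other than $y$, so only $\phi(y)$ is affected, its rank-difference dropping from $b$ to $b-a\ge1$. Writing this as $a$ successive unit decrements and applying~(i) to each, the decrease of $\phi(y)$ equals $\sum_{v=b-a}^{b-1}1/(q^{c(v)}-q^{c(v)-1})$; every such $v$ satisfies $1\le v<q^{c+1}$, so $c(v)\le c$ and each summand is at least $1/(q^{c}-q^{c-1})$. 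Hence $\Phi_N$ decreases by at least $a/(q^{c}-q^{c-1})\ge q^{c}/\bigl(q^{c-1}(q-1)\bigr)=q/(q-1)>1$.

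I do not expect a genuine obstacle here: the lemma is essentially bookkeeping once the case split in~(i) is set up. The only spots that need attention are the boundary case $rd(x)=q^{c+1}-1$ in~(i), and, in~(iii), confirming that every intermediate rank-difference is $\ge1$ and of category $\le c$ — which is exactly what licenses the term-by-term application of~(i) and yields the clean $q/(q-1)$ bound.
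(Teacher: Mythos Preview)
Your proof is correct and follows essentially the same route as the paper: the same two-case split in~(i), the same direct bound on the fractional part in~(ii), and the same telescoping of~(i) over the unit decrements from $b$ down to $b-a$ in~(iii), using $c(v)\le c$ to bound each summand and $a\ge q^{c}$ to conclude. Your write-up is slightly more explicit (you justify why only the two cases in~(i) can occur and spell out the sum in~(iii)), but there is no substantive difference from the paper's argument.
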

\begin{proof}
\begin{enumerate}[(i)]
\item Suppose $c(y) = c(x) = c$. Then, $\phi(y)-\phi(x) = \left(rd(y)-rd(x)\right)/(q^{c}-q^{c-1}) = 1/(q^{c}-q^{c-1})$.
Otherwise, suppose $c(y) = c(x) + 1$. Let $c = c(x)$. Then, $rd(y) = q^{c+1}$, and $rd(x) = q^{c+1}-1$. Thus, $\phi(y)-\phi(x) = (c+1) \cdot q - (q^{c+1}-1-q^{c})/(q^{c} - q^{c-1}) - c \cdot q = 1/(q^{c} - q^{c-1})$.

\item From (i) it follows that $\phi(x)$ is maximal if $rd(x) = n-1$. Then, $c(x) = t-1$, and $\phi(x) \leq (q^{t}-q^{t-1})/(q^{t-1} - q^{t-2}) + (t-1) \cdot q = t \cdot q$.

\item

Let us define the function $f(\cdot)$ that maps $rd(x)$ to $\phi(x)$ for all $x$. It is easy to verify that $f(\cdot)$ is strictly increasing for $rd(x) \geq 1$.

Suppose that $x<y$, and $c(x) = c(y) = c$. Then, $q^c \leq rd(x) < rd(y) < q^{c+1}$. Only the potential of $y$ changes after the link operation, from $f(rd(y))$ to $f(rd(y)-rd(x))$.

We want to show $f(rd(y)) - f(rd(y)-rd(x)) \geq 1$. This quantity is minimized if $rd(x) = q^c$. By (i), $f(rd(y)-q^c) \leq f(rd(y)) - q^c/(q^c-q^{c-1}) \leq f(rd(y))-1$. The result follows. \qedhere 
\end{enumerate}
\end{proof}

\vspace{-0.05in}
Suppose that $x_1, \dots, x_k$ are children of the same node in the heap, indexed in left-to-right order. A subset $\{x_i, x_{i+1}, \dots, x_j\}$ of these nodes, for $1 \leq i < j \leq k$, is referred to as a \emph{contiguous group} of siblings. Within the heap, we consider certain contiguous groups of siblings to be \emph{in a box}; this is only for the purpose of the analysis, with no effect on the implementation.  

We define a second kind of potential to capture the current state of the boxes.  Throughout the lifetime of the heap we maintain a number of invariants about the boxes, as follows.
\begin{enumerate}[(A)]
\item The size of a box (i.e.,  the number of nodes in a box) is of the form $2^b - 1$, for some $2 \leq b \leq t$.
\item Boxes are pairwise disjoint. 
\item A box of size $2^b-1$ can only contain nodes 
of category $b-2$ or smaller.
\end{enumerate}

\vspace{-0.05in}
The \emph{box potential}, denoted $\Phi_B$ is a sum over all boxes of $(b-t-1)/t$, where $2^{b}-1$ is the size of the box. In particular, the largest possible box of size $2^{t}-1$ contributes $-1/t$ to the box potential, and the smallest possible box of size $3$ contributes $(1/t-1)$. Observe that $-n < \Phi_B \leq 0$. At the beginning of the operations there are no boxes, therefore $\Phi_B = 0$.

The \emph{total potential} $\Phi$ combines the node potential and the box potential, as $\Phi = \Phi_N + \Phi_B$.

We are ready to analyse the individual operations and their effect on the potential.

\vspace{-0.15in}
\subparagraph*{Delete-min only.} Again, assume that the \dm~operation performs the pairing and accumulation rounds first, and deleting the root afterwards.  We need to account for the change of node and box potential in all steps.

Consider an operation $\link(x,y)$, and assume w.l.o.g.\ that $x < y$, and therefore $x$ becomes the parent of $y$. We say that the link is a \emph{good link} if one of the following holds:
\begin{enumerate}[(1)]
\item before the link $x$ and $y$ were of the same category,
\item the link occurs in the accumulation round, is not of type (1), and $y$ is the left-to-right minimum before the link (being replaced by $x$ in this role).
\end{enumerate}

\vspace{-0.05in}
At a high level, our strategy is to show that for every $t \cdot 2^t$ link operations, one good link occurs. A good link of the type specified in case (1) decreases the node potential by at least $1$. Therefore, by Lemma~\ref{obs}(ii), there are at most $n \cdot t \cdot q$ such links. (In our current setting no action increases the node potential.) A good link of the type specified in case (2) can occur at most $t-1$ times during a \dm. To see that there are no more than $t-1$ type-(2) good links in a \dm, observe that such an event necessarily leads to a decrease in category of the current left-to-right minimum, and the number of possible categories is $t$.

The link operations that we charge to a single good link may be scattered through multiple \dm~operations. The purpose of the boxes is to keep track of link operations that we have not yet accounted for.

Let $k$ denote the number of children of the deleted root, i.e.,  the real cost of the operation, up to a constant factor, and let us look in turn at the pairing round, accumulation round, and the deletion of the root. \\
%\vspace{0.05in}

\noindent\textit{Pairing round.} Consider the links involving nodes from a box of size $2^b-1$. If at least one of these is a good link of type (1), we simply remove the box. The result is an increase in box potential by $(t-b+1)/t$ due to the deletion of the box, and a decrease in node potential by at least $1$ due to the good link. This amounts to a decrease in total potential of $(b-1)/t \geq 1/t$. 
Observe that if the box size before the operation is $2^2-1 = 3$ (i.e.,  minimal), then the link operation within the box is \emph{always} a good link. This is because, by invariant (C), all nodes in the box are of category $0$.

Suppose that we process a box of size $2^b-1$ for $b >2$, and no good link occurs within the box. Then, the box continues to exist around the winners of the link operations where both nodes were in the same box, i.e., the \emph{losers} of the link operations, and possibly the nodes ``on the margin'' leave the box. Observe that the size of the box goes from $2^b-1$ to $2^{b-1}-1$. Due to the decrease of the box size, we have a decrease in box potential by $1/t$, and consequently a decrease by $1/t$ in total potential. (See Figure~\ref{fig2} for illustration.)

\begin{figure}
	\begin{center}
\includegraphics[width=0.99\textwidth]{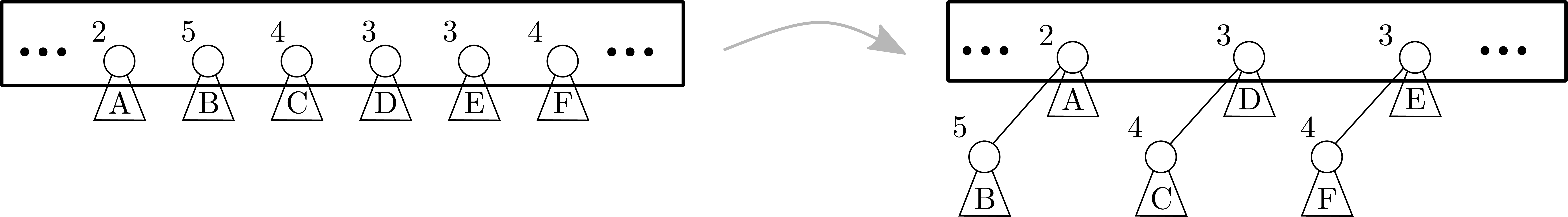}
	\end{center}
\caption{The evolution of a box. Numbers indicate categories of nodes. The losers of the comparisons drop out of the box.
\label{fig2}}
\end{figure}

Now suppose we execute $T = 2^t-1$ consecutive non-good link operations without encountering a box. Then, we form a box around the $2^t-1$ winners of these links. Due to the newly created box of maximum size, the total potential decreases by $1/t$.

Consider a sequence of $T$ consecutive links. Then, either (i) one of the links is good, or (ii) we create a new box containing the $T$ winners, or (iii) we encounter an existing box. In the latter case, we process the box until its end. Thus, in all cases, for at most $2T-1$ consecutive link operations, we achieve a saving in potential of at least $1/t$. 
This yields (for the entire pairing round) a decrease in potential of at least 
$$ \Bigl\lfloor \frac{\lfloor k/2 \rfloor}{2 \cdot 2^t - 3} \Bigr\rfloor \cdot \frac{1}{t} \geq \frac{k}{4 \cdot t\cdot 2^t} - \frac{2}{t}.$$

We need to argue that in the cases described above, the box invariants are maintained. Condition (A) clearly holds in every case. When we create a new box, we only use nodes that are not in a box, therefore condition (B) holds. We never add new nodes to an existing box, thus (B) continues to hold. We next discuss the validity of invariant (C), which is the crucial ingredient of the proof.

When a new box of size $2^t-1$ is created, we claim that there cannot be any node of category $t-1$ within the box. This is because, if any node in the box was of category $t-1$, its ``winning'' of the linking would have happened against another node of category $t-1$, a type-(1) good link that contradicts our condition for creating the box.

Similarly, when shrinking an existing box from size $2^b - 1$ to $2^{b-1} - 1$, if invariant (C) was true before the operation (i.e.,  there were only nodes of category $0,\dots,b-2$ in the box), then the nodes in the box after the operation can only be of category $0,\dots,b-3$. A node of category $b-2$ could only have ``won'' against another node of category $b-2$, a type-(1) good link that contradicts our condition for maintaining the box. This establishes the invariants. \\
%\vspace{0.05in}

\noindent\textit{Accumulation round.} By Lemma~\ref{obs}(i), the node potential can only further decrease. We need to argue that the box-structure is not destroyed by this round.

Recall that we start with the leftmost node (the current minimum), and keep linking against the nodes from left to right. As long as there is no type-(2) good link, the nodes that we encounter left-to-right become the children of the current minimum in right-to-left order. The box invariants are clearly not affected by this reversal of order, the box, together with its parent node simply moves further down the tree. (Categories of nodes in the box can only decrease.)

If there is a change in the minimum, i.e.,  a type-(2) event, while we are processing the nodes in a box, then we delete the box. By deleting the box, we may increase the box potential by $1 - 1/t < 1$. This is the only kind of potential-change we need to consider in this round, yielding an increase in total potential of at most $t-1$ (since there are at most $t-1$ type-(2) events). \\
%\vspace{0.05in}

\noindent\textit{Deleting the root.} The actual deletion of the root leaves the node potential unchanged, since %decreases the node potential by $1$, since
at that time the root has only one child, its successor, which has rank-difference $1$, and consequently, node potential $0$.
The box potential is unaffected, since the root cannot be in a box. \\
%\vspace{0.1in}

\vspace{-0.1in}
To summarize, if the actual cost of \dm~is $k$, then, collecting the terms from the different rounds, the decrease in potential is at least $k/(4t \cdot 2^t) - 2/t - (t-1) \geq k/(4t \cdot 2^t) - t$.

Denoting the actual costs of the $n$ \dm~oparations by $k_1,\dots,k_n$, for the total potential decrease $\Delta \Phi$ we have: 
\vspace{-0.1in}
$$\Delta \Phi \geq \left( \frac{1}{4t \cdot 2^t}\sum_{i=1}^n{k_i} \right) - n \cdot t.$$

From Lemma~\ref{obs}(i) it follows that $\Phi \leq n \cdot t \cdot q$, for an arbitrary heap. For the empty heap, we have $\Phi = 0$. Thus, $\Delta \Phi \leq n \cdot t \cdot q$. It follows that the total cost is $\sum{k_i} \leq (n\cdot t \cdot q + n \cdot t) \cdot (4t \cdot 2^t)$.

Let us now choose the scaling factor used in the analysis. Recall that $t = \lfloor \log_q{(n-1)} \rfloor$. Fixing $t = \lfloor \sqrt{\log{n}} \rfloor$, we have $q = O(n^{1/\sqrt{\log{n}}}) = O(2^{\sqrt{\log{n}}})$. Thus, we obtain that the amortized cost of \dm~is $O(\log{n} \cdot 4^{\sqrt{\log{n}}})$.

\vspace{-0.15in}
\subparagraph*{Insert and Delete-min.}

Next, we consider sequences of \ins~and \dm~operations, arbitrarily intermixed. The amortised cost of an operation is defined as the true cost \emph{plus} the increase in potential due to the operation. 

The analysis of \dm~is the same as before, yielding a decrease in total potential due to a \dm~of at least $k/(4t \cdot 2^t) - t$, where $k$ is the actual cost of the operation.

The true cost of \ins~is $O(1)$, but we also need to consider the increase in potential. The box potential is not affected by \ins, as we do not create any new boxes and neither the root, nor the newly inserted node are in a box.

When a key $x$ is inserted, its first contribution is its own node potential $\phi(x)$, in case it becomes the child of the root. (If $x$ is smaller than the root, then, after the linking, both $x$ and the old root have node potential zero.) 
By Lemma~\ref{obs}(ii), the resulting increase in the node potential is at most $t \cdot q$.

As in \S\,\ref{sec20}, we need to deal with the fact that an \ins~operation may change the ranks of existing nodes, possibly increasing their rank-difference. We could proceed with an argument similar to the one in \S\,\ref{sec20}, estimating the change in potential for nodes in each category. The argument gets complicated, however, because the increase in rank-difference may also cause an increase in category, invalidating the box invariants.

To avoid\footnote{We observe that our problem of maintaining the ranks under insertions is reminiscent of the well-studied \emph{ordered list maintenance} problem, for which efficient (although quite involved) solutions exist. Luckily, we can get away with a simpler solution.} the issue of updating ranks during an \ins~operation, we redefine \emph{rank} to take into account not just the items currently in the heap, but also the items that will be inserted into the heap in the future. (Since we use ranks only in the analysis, we can assume that future operations are known to us.) In this way, an \ins~operation has no effect on ranks, rank-differences, or categories of existing items, since the rank of the newly inserted item has already been taken into consideration. Therefore, there is no further change in potential that we need to consider.

There remains the issue, that we cannot afford to look \emph{too far} into the future, as we want the value $n$ to be, at all times, close to the current size of the heap. Therefore, we split the operations into epochs, and keep the value of $n$ fixed throughout an epoch. If the heap is empty, e.g., in the beginning, we let $n=4$. 

An epoch ends when the true size of the heap increases to $n$ (after an \ins), or decreases to $\lfloor n/4 \rfloor$ (after a \dm), or if, within the epoch, $n$ distinct keys have already been encountered (including those that were in the heap at the beginning of the epoch). In all three cases, we reset $n$ to be twice the true size of the heap, remove all boxes, and start a new epoch. 
(We stress that epochs and boxes are used only in the analysis, with no effect on the actual operation of the data structure.)
As mentioned, the ranks of the items are computed with respect to \emph{all items} encountered during an epoch, they are therefore, fixed within the epoch. Clearly, all ranks and rank-differences are still upper bounded by $n$.

We make four observations, all of which are easy to verify based on the preceding discussion: (1) The true size of the heap is between $\lfloor n/4 \rfloor$ and $n$. (2) Within a finished epoch there must have been at least $\lceil n/4 \rceil$ operations. (3) The increase in node potential due to the resetting of $n$ is not more than $n \cdot t \cdot q$. (4) Since the boxes are disjoint, there are less than $n$ boxes that we are removing, increasing the total potential by at most $n$. %(4) The shift potential can only decrease in this step. 

We distribute the increase in potential of at most $n \cdot t \cdot q + n$ among the $\lceil n/4 \rceil$ operations in the finished epoch, obtaining that the \emph{increase} in potential during an \ins~is at most $5 \cdot t \cdot q + 4$, and the increase in potential during a \dm~is at most $t - k/(4t \cdot 2^t)  + 4 \cdot t \cdot q + 4$. Scaling the potential by $t \cdot 2^t$, we obtain that the amortized cost of both \ins~and \dm~is $O(t^2 \cdot q \cdot 2^t)$. 

Again, choosing $t = \lfloor \sqrt{\log{n}} \rfloor$, we obtain $q = O(n^{1/\sqrt{\log{n}}}) = O(2^{\sqrt{\log{n}}})$, and the amortized costs of $O(\log{n} \cdot 4^{\sqrt{\log{n}}})$ follow. 

\subparagraph*{Meld.} A \meld~operation is similar to an \ins, in that only one node changes its rank-difference, and thereby its node potential (the root with the larger key). Furthermore, \meld~also leaves the box potential unaffected, so its analysis goes through similarly to the analysis of \ins, yielding the same amortized cost for \meld~as for \ins~and \dm. Observe, however, that if we also include \meld~operations in a sequence of operations, then the value $n$ that appears in the cost no longer denotes just the size of the affected heap, it is instead, the number of items in \emph{all heaps} that we are currently working with.

\newpage
\subparagraph*{Other variants.}
The presented analysis is fairly general. It is not difficult to adapt it to the \emph{standard} and \emph{multipass} variants of pairing heaps, yielding similar upper bounds. For the standard variant the analysis is essentially the same as for the forward variant. For multipass, the analysis becomes simpler, since boxes need to be maintained only during individual \dm~operations. However, for these variants, stronger upper bounds are already known, as discussed in \S\,1.

Recall that during a \link~operation, the larger of the two items is linked as the leftmost child of the smaller.
The reader may observe that the different behaviors of the standard and forward variants depend on this particular way of implementing \link. If we were to link by making the larger item the \emph{rightmost} child of the smaller, then the situation would reverse, with the forward variant being easy, and the standard variant hard to analyse.

We may modify the implementation of \link, such as to link \emph{arbitrarily} as the leftmost or rightmost child (i.e., deciding independently for every \link~operation whether to link at the left or at the right). Our analysis also extends to this more general class of pairing heap algorithms (that use the modified \link~implementation) with minimal changes. The only difficulty that arises in the new setting is that during the accumulation round, an existing box may split into two, one part going to the left, the other to the right side of the current minimum. To account for this loss, we need to start with a larger initial box ($4^t$ instead of $2^t$). The resulting bounds are still of the form $2^{O(\sqrt{\log{n}})}$.

Improving (significantly) the bounds presented in the paper and extending the analysis to other operations should be possible but will likely require new ideas.

\newpage

%\bigskip
%\bigskip
%\bigskip

%\bibliographystyle{alpha}

%\newpage

%\appendix
%\section{Appendix: Illustration}

%%
%% Bibliography
%%

%% Please use bibtex, 

\bibliography{submit}

\end{document}